\documentclass[a4paper]{article}
\usepackage{braket}
\usepackage[basic]{complexity}
\usepackage[numbers,sort&compress]{natbib}
\usepackage{tabularx}
\usepackage{graphicx,color}

\usepackage{amsmath,amssymb,amsthm}
\usepackage{subcaption}
\usepackage[ruled,vlined]{algorithm2e}
\usepackage{authblk}

\SetKwInput{KwInput}{Input}
\SetKwInput{KwOutput}{Output}

\newtheorem{theorem}{Theorem}
\newtheorem{corollary}{Corollary}
\parskip=2pt

\def\R{{\mathbb{R}}}
\def\fta{\fontsize{12pt}{0}}

\begin{document}

\title{\large\bf QDNN: DNN with Quantum Neural Network Layers}
\author[1,2]{\fta Chen Zhao}
\author[1,2]{\fta Xiao-Shan Gao}
\affil[1]{\small \em Academy of Mathematics and Systems Science, Chinese Academy of Sciences, Beijing 100190, China}
\affil[2]{\small \em University of  Chinese Academy of Sciences, Beijing 100049, China}
\date{}
\maketitle

\begin{abstract}
\noindent
In this paper, we introduce a quantum extension of classical DNN, QDNN. The QDNN consisting of quantum structured layers can uniformly approximate any continuous function 
and has more representation power than the classical DNN.
It still keeps the advantages of the classical DNN such as the non-linear activation, the multi-layer structure, and the efficient backpropagation training algorithm. Moreover, the QDNN can be used on near-term noisy intermediate-scale quantum processors. A numerical experiment for image classification based on quantum DNN is given, where a high accuracy rate is achieved.
\end{abstract}

\vspace{2pc}
\noindent{\it Keywords}: Deep neural networks, quantum machine learning, hybrid quantum-classical algorithm, NISQ

\section{Introduction} 
\label{sec:introduction}
Quantum computers use the principles of quantum mechanics for computing, which are more powerful than classical computers in many computing problems \cite{shor1994algorithms, grover1996fast}.
Noisy intermediate-scale quantum (NISQ) \cite{preskill2018quantum} devices are the only quantum devices that can be used in the near-term, where only a limited number of qubits without error-correcting
can be used.

Many quantum machine learning algorithms, such as quantum support vector machine, quantum principal component analysis, and quantum Boltzmann machine, have been developed \cite{wiebe2012quantum, schuld2015introduction, biamonte2017quantum, rebentrost2014quantum, lloyd2014quantum, amin2018quantum, gao2018quantum}, and these algorithms were shown to be more efficient than their classical versions.
Recently, several NISQ quantum machine learning algorithms, such as quantum generative adversarial networks, quantum circuit Born machine, and quantum kernel methods, have been proposed \cite{lloyd2018quantum, dallaire2018quantum, liu2018differentiable, schuld2019quantum, havlivcek2019supervised, benedetti2019parameterized}.
However, these algorithms did not aim to build quantum deep neural networks.

In recent years, deep neural networks \cite{lecun2015deep} became the most important and powerful method in machine learning, which was widely applied in computer vision \cite{voulodimos2018deep}, natural language processing \cite{socher2012deep}, and many other fields. The basic unit of DNN is the perception, which is an affine transformation together with an activation function.
%
%
The non-linearity of the activation function and the depth give the DNN much representation power \cite{HORNIK1991251, LESHNO1993861}.
Approaches have been proposed to build classical DNNs on quantum computers \cite{killoran2018continuous, zhao2019building, kerenidis2020quantum}. They achieved quantum speed-up under certain assumptions.
But the structure of classical DNNs is still used, and only some local operations are speeded up by quantum algorithms, for instance, the inner product was speedup using the swap test~\cite{zhao2019building}. Deep neural networks are also regarded as non-linear functions. In \cite{PhysRevA.98.032309}, they considered the ability of one parameterized quantum circuit (PQC) as a non-linear function.

In this paper, we introduce the quantum neural network layer to classical DNN, the QDNN, which consists of quantum structured layers. It is proved
that the QDNN can uniformly approximate any continuous function
and has more representation power than the classical DNN. Furthermore, it still keeps the advantages of the classical DNN such as the non-linear activation, the multi-layer structure, and the efficient backpropagation training algorithm.
Moreover, the QDNN can be used on near-term noisy intermediate-scale quantum processors.
Therefore, QDNN provides new neural network which can be used in near-term quantum computers
and is more powerful than traditional DNNs.

\section{Methods} 
\label{sec:methods}

The main contribution of this paper is to introduce the concept of the quantum neural network layer (QNNL) as a quantum analog to the classic neural network layer in DNN.
Since all quantum gates are unitary and hence linear, the main difficulty of building a QNNL is introducing non-linearity. We solve this problem by encoding the input vector to a quantum state non-linearly with a
parameterized quantum circuit (PQC), which is similar to \cite{PhysRevA.98.032309}.
Since all quantum gates are unitary and hence linear, the main difficulty of building a QNNL is introducing non-linearity. We solve this problem by encoding the input vector to a quantum state non-linearly with a
parameterized quantum circuit (PQC) \cite{benedetti2019parameterized}.
A QNNL is a quantum circuit that is different from the classical neural network layer. Different from \cite{PhysRevA.98.032309}, we consider a multi-layer structure that may contain multiple PQCs with a deep neural network. A quantum DNN (QDNN) can be easily built with QNNLs since the input and output of a QNNL are classical values.

The advantage of introducing QNNLs is that we can access vectors of exponential dimensional Hilbert spaces with only polynomial resources on a quantum computer.
We prove that this model can not be classically simulated efficiently unless universal quantum computing can be classically simulated efficiently. So QDNNs have more representation power than classical DNNs. We also give training algorithms of QDNNs which are similar to the backpropagation (BP) algorithm. Moreover, QNNLs use the hybrid quantum-classical scheme. Hence, a QDNN with a reasonable size can be trained efficiently on NISQ processors.
Finally, a numerical experiment for image recognition is given using QDNNs, where a high accuracy rate is achieved and only 8 qubits are used.

We finally remark that all tasks using DNN can be turned into
quantum algorithms with more representation powers by replacing the DNN by QDNN.


\subsection{Quantum neural network layers} 
\label{sub:quantum_neural_network_layers}
A deep neural network consists of a large number of {\em neural network layers}, and each neural network layer is a non-linear function
$f_{\overrightarrow{W},\vec{b}}(\vec{x}): \mathbb{R}^n\rightarrow \mathbb{R}^m$
with parameters $\overrightarrow{W},\vec{b}$.
In the classical DNN,  $f_{\overrightarrow{W},\vec{b}}$ takes the form of $\sigma\circ L_{\overrightarrow{W},\vec{b}}$, where  $L_{\overrightarrow{W},\vec{b}}$ is an affine transformation and $\sigma$ is a non-linear activation function. The power of DNNs comes from the non-linearity of the activation function. Without activation functions, DNNs will be nothing more than affine transformations.

However, all quantum gates are unitary matrices and hence linear. So the key point of developing QNNLs is introducing non-linearity.

We build QNNLs using the hybrid quantum-classical algorithm scheme \cite{mcclean2016theory},
which is widely used in many NISQ quantum algorithms \cite{liu2018differentiable,liu2019variational}.
As shown in figure \ref{fig-1}, a hybrid quantum-classical algorithm scheme  consists of a quantum part and a classical part. In the quantum part,
parameterized quantum circuits (PQCs) are used to prepare quantum states with quantum processors.
In the classical part, parameters of the PQCs are optimized using classical computers.
\begin{figure}
    \centering
    \includegraphics{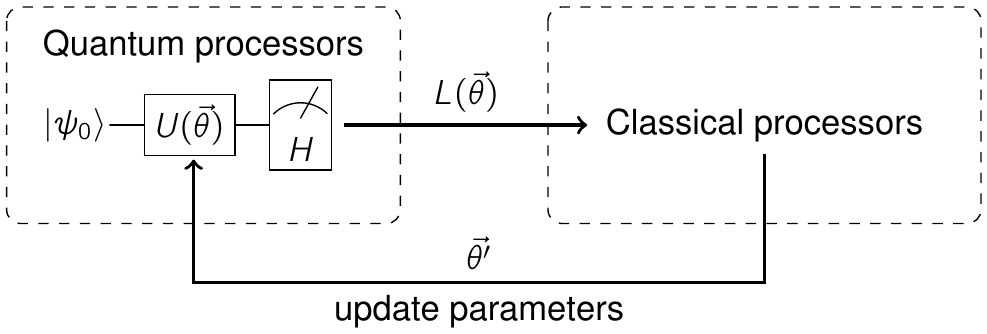}
    \caption{Hybrid quantum-classical scheme. }
\label{fig-1}
\end{figure}

A PQC  is a quantum circuit with parametric gates, which is of the form \[
    U(\vec{\theta}) = \prod_{j=1}^l U_j(\theta_j)
\]
where $\vec{\theta}=(\theta_1,\dots,\theta_l)$ are the parameters, each $U_j(\theta_j)$ is a rotation gate $U_j(\theta_j) = \exp(-i \frac{\theta_j}{2}H_j)$, and  $H_j$ is a 1-qubit or a 2-qubits gate such that $H_j^2 = I$. For example, in this paper we will use the Pauli gates $X,Y,Z$, and the $\mathrm{CNOT}$ gate.

As shown in figure \ref{fig-1}, once fixed an ansatz circuit $U(\vec{\theta})$ and a Hamiltonian $H$, we can define the loss function of the form $L = \bra{0}U^\dagger(\vec{\theta}) H U(\vec{\theta})\ket{0}$. Then we can optimize $L$ by updating parameters $\vec\theta$ using optimizating algorithms \cite{li2017hybrid,nakanishi2019sequential}.
With gradient-based algorithms \cite{li2017hybrid}, one can efficiently get gradient imformation $\frac{\partial L}{\partial \vec{\theta}}$ which is essentially important in our model. Hence, we will focus on gradient-based algorithms in this paper.

Now we are going to define a QNNL, which consists of 3 parts: the encoder, the transformation,
and the output, as shown in figure \ref{fig-2}.

For a classical input data $\vec{x}\in \mathbb{R}^n$, we introduce non-linearity to our QNNL by encoding the input $\vec{x}$ to a quantum state $\ket{\psi(\vec{x})}$ non-linearly. Precisely, we choose a PQC $U(\vec{x})$ with at most $O(n)$ qubits and apply it to an initial state $\ket{\psi_0}$ to obtain a quantum state
\begin{equation}\label{eq-ql1}
\ket{\psi(\vec{x})}=U(\vec{x})\ket{\varphi_0}
\end{equation}
encoded from $\vec{x}$. The PQC is naturally non-linear in the parameters. For example, the encoding process \begin{equation*}
    \ket{\psi(x)}=\exp(-i \frac{x}{2}X)\ket{0}
\end{equation*}
from $x$ to $\ket{\psi(x)}$ is non-linear. Moreover, we can compute the gradient of each component of $\vec{x}$ efficiently.  The gradient of the input in each layer is necessary when training the QDNN. The encoding step is the analog to the classical activation step.

After encoding the input data, we apply a linear transformation as the analog of the linear transformation in the classical DNNs. This part is natural on quantum computers because all quantum gates are linear. We use another PQC $V(\overrightarrow{W})$ with parameters $\overrightarrow{W}$ for this purpose. We assume that the number of parameters in $V(\overrightarrow{W})$ is $O(\mathrm{poly}(n))$.

Finally, the output of a QNNL will be computed as follow. We choose $m$ fixed  Hamiltonians $H_j$, $j=1,\dots,m$,
and output
\begin{equation}\label{eq-ql2}
    \vec{y} = \begin{pmatrix} y_1 + b_1 \\ \vdots \\ y_m + b_m \end{pmatrix},\quad y_j = \bra{\psi(\vec{x})}V^{\dagger}(\overrightarrow{W}) H_j V(\overrightarrow{W})\ket{\psi(\vec{x})}, b_j\in \mathbb{R}.
\end{equation}
Here, the bias term $\vec{b}=(b_1,\dots,b_m)$ is an analog of bias in classical DNNs.
Also, each $y_j$ is a  hybrid quantum-classical scheme with PQC $U$
and Hamiltonian $V^{\dagger}(\overrightarrow{W}) H_j V(\overrightarrow{W})$.

\begin{figure}
\centering
\includegraphics[width=1\textwidth]{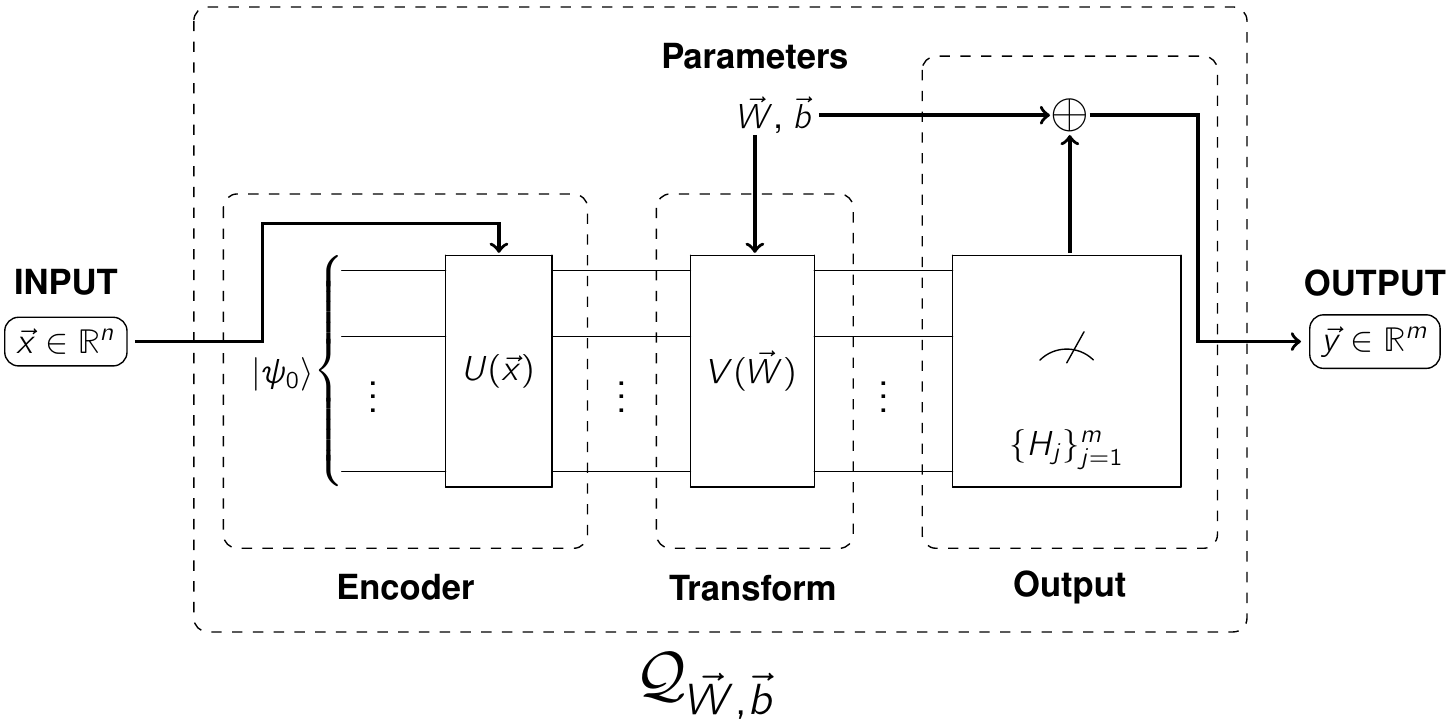}
\caption{The structure of a QNNL $\mathcal{Q}_{\vec{W},\vec{b}}$}
\label{fig-2}
\end{figure}

To compute the output efficiently, we assume that the expectation value of each of these Hamiltonians can be computed in $O(\mathrm{poly}(n,\frac{1}{\varepsilon}))$, where $\varepsilon$ is the precision.
It is easy to show that all Hamiltonians of the following form satisfy this assumption
 \[
    H = \sum_{i=1}^{O(\mathrm{poly}(n))} H_i,
\]
where $H_i$ are tensor products of Pauli matrices or $k$-local Hamiltonians.

In summary, a {\em QNNL} is a function $$\mathcal{Q}_{\overrightarrow{W},\vec{b}}(\vec{x}): \mathbb{R}^n\rightarrow \mathbb{R}^m$$
defined by (\ref{eq-ql1}) and (\ref{eq-ql2}), and shown in figure \ref{fig-2}.
Note that a QNNL is a function with classic input and output,
and can be determined by a tuple \[
    \mathcal{Q} = (U,V,[H_j]_{j=1,\dots,m})
\]
with parameters $(\overrightarrow{W}, \vec{b})$.
Notice that the QNNLs activate before affine transformations while classical DNNLs activate after affine transformations. But this difference can be ignored when considering multi-layers.

\subsection{QDNN and its training algorithms} 
Since the input and output of QNNLs are classical values, the QNNLs can be naturally embedded in classical DNNs. A neural network consists of the composition of multiple compatible QNNLs and classical DNN layers is called {\em quantum DNN (QDNN)}:
$$QDNN =
\mathcal{L}_{1,\overrightarrow{W}_1,\vec{b_1}}\circ \cdots \circ \mathcal{L}_{l,\overrightarrow{W}_l,\vec{b_l}}$$
where each $\mathcal{L}_{i,\overrightarrow{W_i},\vec{b_i}}$ is a classical or a quantum layer
from $\R^{n_{i-1}}$ to $\R^{n_{i}}$ for $i=1,\dots,l$
and $\{\overrightarrow{W_i},\vec{b_i},i=1,\dots,l\}$ are the parameters of the QDNN.

We will use gradient descent to update the parameters. In classical DNNs, the gradient of parameters in each layer is computed by the backpropagation algorithm (BP). Suppose that we have a QDNN. Consider a QNNL $\mathcal{Q}$ with parameters $\vec{W}, \vec{b}$, whose input is $\vec{x}$ and output is $\vec{y}$. Refer to (\ref{eq-ql1}) and (\ref{eq-ql2}) for details.

To use the BP algorithm, we need to compute $\frac{\partial \vec{y}}{\partial \overrightarrow{W}}, \frac{\partial \vec{y}}{\partial \vec{b}}$ and $\frac{\partial \vec{y}}{\partial \vec{x}}$.
Computing $\frac{\partial \vec{y}}{\partial \vec{b}}$ is trivial. Because $U, V$ are PQCs and each component of $\vec{y}$ is an output of a hybrid quantum-classical scheme, both $\frac{\partial \vec{y}}{\partial \overrightarrow{W}}$ and $\frac{\partial \vec{y}}{\partial \vec{x}}$ can be estimated by shifting parameters \cite{li2017hybrid}.

\begin{algorithm}[H]
\label{algo-1}
\SetAlgoLined
\KwInput{A PQC $U(\vec{\theta})$, an initial state $\ket{\varphi_0}$, a Hamiltonian $H$, and an initial value $\vec{x} = (x_1, \dots, x_m)$}
\KwOutput{The gradient $\frac{\partial}{\partial x_j} \bra{\varphi_0}U^\dagger(\vec{x})HU(\vec{x})\ket{\varphi_0}$}

Set $x_j:=x_j+\frac{\pi}{2}$ for each $j$\;
Estimate $\braket{H_{j,+}} = \bra{\varphi_0}U^\dagger(\vec{x})HU(\vec{x})\ket{\varphi_0}$\;
Set $x_j:=x_j - \pi$ for each $j$\;
Estimate $\braket{H_{j,-}} = \bra{\varphi_0}U^\dagger(\vec{x})HU(\vec{x})\ket{\varphi_0}$\;
\Return{$\frac{1}{2}\big[\braket{H_{j,+}} - \braket{H_{j,-}}\big]$
\caption{Gradient estimation for PQCs}}
\end{algorithm}

We can use the above algorithm to estimate the gradient in each quantum layer.

\begin{algorithm}[H]
\SetAlgoLined
\KwInput{A QNNL $\mathcal{Q}(\vec{x},\vec{W},\vec{b})$, the current value of $\vec{x},\vec{W},\vec{b}$ and $\vec{y} = \mathcal{Q}(\vec{x},\vec{W},\vec{b})$, the gradient of the output $\frac{\partial L}{\partial \vec{y}}$}
\KwOutput{The gradient $\frac{\partial L}{\partial \vec{x}}, \frac{\partial L}{\partial \vec{W}}, \frac{\partial L}{\partial \vec{b}}$}

Initialize $\frac{\partial \vec{y}}{\partial \vec{x}}$ and $\frac{\partial \vec{y}}{\partial \vec{W}}$\;
\For{$x_j$ in $\vec{x}$}
{
    Estimate $\frac{\partial \vec{y}}{\partial x_j}$ with Algorithm \ref{algo-1}\;
    Set the $j$-th component of $\frac{\partial \vec{y}}{\partial \vec{x}}$ to $\frac{\partial \vec{y}}{\partial x_j}$\;
}
\For{$w_k$ in $\vec{W}$}
{
    Estimate $\frac{\partial \vec{y}}{\partial w_k}$ with Algorithm \ref{algo-1}\;
    Set the $k$-th component of $\frac{\partial \vec{y}}{\partial \vec{W}}$ to $\frac{\partial \vec{y}}{\partial w_k}$\;
}

Set $\frac{\partial L}{\partial \vec{x}}:=\frac{\partial L}{\partial \vec{y}} \frac{\partial \vec{y}}{\partial \vec{x}}$\;
Set $\frac{\partial L}{\partial \vec{W}}:=\frac{\partial L}{\partial \vec{y}} \frac{\partial \vec{y}}{\partial \vec{W}}$\;
Set $\frac{\partial L}{\partial \vec{b}}:=\frac{\partial L}{\partial \vec{y}}$\;
\Return{$\frac{\partial L}{\partial \vec{x}}, \frac{\partial L}{\partial \vec{W}}, \frac{\partial L}{\partial \vec{b}}$\;
\caption{Gradient estimation for QNNLs}}
\end{algorithm}
Hence, gradients can be back propagated through the quantum layer, and QDNNs can be trained with the BP algorithm.

\subsection{Representation power of QDNNs} 
\label{sub:representation_power_of_qdnns}
In this section, we will consider the representation power of the QDNN. We will show that QDNN can approximate any continuous function similar to the classical DNN. Moreover, if quantum computing can not be classically simulated efficiently, the QDNN has more representation power than the classical DNN with polynomial computation resources.

\subsubsection{Approximate functions with QDNNs}
The universal approximation theorem ensures that DNNs can approximate any continuous function \cite{cybenko1989approximation, HORNIK1991251, LESHNO1993861,pinkus_1999}. Since the class of QDNNs is an extension of the class of classical DNNs, the universal approximation theorem can be applied to the QDNN trivially.
Now, let us consider two subclasses of the QDNN.
\begin{itemize}
\item DNN with only QNNLs.
\item DNN with QNNLs and affine layers.
\end{itemize}

In the first subclass, let us consider a special type of QNNLs which can represent monomials \cite{PhysRevA.98.032309}. Consider the circuit \begin{equation}
    U(x) = R_y(2\arccos(\sqrt{x})) = \begin{pmatrix}
        x & -\sqrt{1-x^2} \\
        \sqrt{1-x^2} & x
    \end{pmatrix}
\end{equation} and the Hamiltonian $H_0 = \ket{0}\bra{0}$. The expectation value $\bra{0}U^\dagger(x)H_0U(x)\ket{0}$ is the monomial $x$ for $x\in[0,1]$. For multivariable monomial $\mathbf{x} = x_1^{m_1}\cdots x_k^{m_k}$, we use the circuit \begin{equation}
    U(\mathbf{x}) = \big[\otimes_{j_1 = 1}^{m_1}R_y(2\arccos(\sqrt{x_1}))\big]\otimes\dots\otimes \big[\otimes_{j_k = 1}^{m_k}R_y(2\arccos(\sqrt{x_k}))\big]
\end{equation} and the Hamiltonian $H_\mathbf{0} = \ket{0\dots 0}\bra{0\dots 0}$. Similarly, the expectation value of
\[
    \bra{0\dots 0}U^\dagger(\mathbf{x})H_{\mathbf{0}}U(\mathbf{x})\ket{0\dots 0} = x_1^{m_1}\cdots x_k^{m_k}
\] for $x_1,\dots,x_k\in[0,1]$.

With the above results and Stone-Weierstrass theorem \cite{10.2307/3029750}, we can deduce the following theorem.
\begin{theorem}
\label{approx-th1}
The QDNN with only QNNLs can uniformly approximate any continuous function \[
    f:[0,1]^k \rightarrow \mathbb{R}^l.
\]
\end{theorem}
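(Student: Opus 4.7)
The plan is to verify the hypotheses of the Stone--Weierstrass theorem for the class of scalar-valued functions realizable by a single QNNL, which is itself a trivial QDNN consisting only of QNNLs. Once density of this class in $C([0,1]^k,\mathbb{R})$ is established, the vector case $f:[0,1]^k\to\mathbb{R}^l$ follows componentwise.

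The main step is to lift the monomial construction recalled in the excerpt to an arbitrary polynomial. Given $p(\vec{x}) = \sum_\alpha c_\alpha \vec{x}^\alpha$ of coordinate-wise degree at most $d$, I would take the encoder $U(\vec{x}) = \bigotimes_{i=1}^k R_y(2\arccos\sqrt{x_i})^{\otimes d}$ on $kd$ qubits, set $V$ to the identity and the bias to $0$, and use the single Hamiltonian
\[
    H_p = \sum_\alpha c_\alpha \, H_\alpha, \qquad H_\alpha = \bigotimes_{i=1}^k \bigl((\ket{0}\bra{0})^{\otimes \alpha_i} \otimes I^{\otimes d-\alpha_i}\bigr),
\]
so that $\bra{0}^{\otimes kd} U^\dagger(\vec{x}) H_p U(\vec{x}) \ket{0}^{\otimes kd} = p(\vec{x})$ on $[0,1]^k$ by the computation reviewed in the excerpt and the linearity of expectation. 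Writing $\ket{0}\bra{0} = (I+Z)/2$ expands $H_p$ as a sum of at most $O((d+1)^k)$ tensor products of Paulis, which meets the efficiency assumption of Section 2.1, so this $H_p$ is admissible. Placing $l$ such Hamiltonians in parallel produces a single QNNL realizing any vector polynomial $(p_1,\ldots,p_l):[0,1]^k \to \mathbb{R}^l$ exactly.

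With this in hand, the class of QNNL-representable scalar functions on $[0,1]^k$ contains every polynomial. As this is a unital subalgebra of $C([0,1]^k,\mathbb{R})$ that separates points, the Stone--Weierstrass theorem yields its uniform density in $C([0,1]^k,\mathbb{R})$, and componentwise approximation concludes the proof. The only delicate point is checking that the combined Hamiltonian $H_p$ remains inside the admissible class; once the target polynomial is chosen, the number of Pauli terms is fixed and polynomial in $d$ and $k$, so nothing beyond bookkeeping is needed. I therefore expect no substantive obstacle beyond this verification.
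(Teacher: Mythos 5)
Your proposal is correct and follows essentially the same route as the paper: realize monomials on $[0,1]^k$ via tensor powers of $R_y(2\arccos\sqrt{x_i})$ with projector Hamiltonians, pass to polynomials by linearity, and invoke Stone--Weierstrass; the paper simply leaves the monomial-to-polynomial step and the componentwise reduction implicit. Your only slip is the claim that the Pauli expansion of $H_p$ has polynomially many terms in $d$ and $k$ (it is $O((d+1)^k 2^{kd})$ in the worst case), but this is irrelevant here since the theorem asserts approximation power, not efficiency, and the expectation of each projector term is directly measurable anyway.
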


Now let us consider the second subclass. As the affine transformation can map the hypercube $[0,1]^k$ to $[a_1,b_1]\times\dots\times [a_k, b_k]$ for any $a_k < b_k$. Hence we have the following result.
\begin{corollary}
\label{approx-co1}
The QDNN with QNNLs and affine layers can uniformly approximate any continuous function \[
    f:D \rightarrow \mathbb{R}^l,
\] where $D$ is a compact set in $\mathbb{R}^k$.
\end{corollary}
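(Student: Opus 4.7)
The plan is to reduce the general case to Theorem \ref{approx-th1} by prepending a single affine layer that maps $D$ into the unit hypercube $[0,1]^k$, and then using the Tietze extension theorem to extend $f$ continuously to the whole cube so that the earlier theorem applies.

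Concretely, first I would use compactness of $D$ to fix real numbers $a_i<b_i$ with $D\subset [a_1,b_1]\times\cdots\times[a_k,b_k]$. Then I would define the affine map $T:\mathbb{R}^k\to\mathbb{R}^k$, $T(x_1,\dots,x_k)=((x_1-a_1)/(b_1-a_1),\dots,(x_k-a_k)/(b_k-a_k))$, which is a bijection of the bounding box onto $[0,1]^k$; in particular $T(D)$ is a compact subset of $[0,1]^k$. Second, I would apply the Tietze extension theorem componentwise to extend the continuous function $f\circ T^{-1}:T(D)\to\mathbb{R}^l$ to a continuous function $\tilde g:[0,1]^k\to\mathbb{R}^l$. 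Third, by Theorem \ref{approx-th1}, for every $\varepsilon>0$ there exists a QDNN $\mathcal{Q}$ built from QNNLs only that approximates $\tilde g$ uniformly on $[0,1]^k$ to precision $\varepsilon$. Fourth, I would exhibit the desired QDNN as $\mathcal{Q}\circ T$: since $T$ is affine, it is a legal layer of the second subclass, and for every $\vec{x}\in D$ one has
\[
\lVert \mathcal{Q}(T(\vec{x}))-f(\vec{x})\rVert=\lVert \mathcal{Q}(T(\vec{x}))-\tilde g(T(\vec{x}))\rVert\le\varepsilon,
\]
which yields uniform approximation on $D$.

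The main obstacle, such as it is, lies in bridging the gap between an arbitrary compact domain $D$ and the hypercube $[0,1]^k$ that Theorem \ref{approx-th1} requires: the QNNLs constructed in that theorem rely on the encoding $R_y(2\arccos(\sqrt{x}))$, which is only well-defined for $x\in[0,1]$. The affine layer handles the shift/scale, but one must also ensure a continuous target function is defined on \emph{all} of $[0,1]^k$ (not only $T(D)$) before invoking the earlier theorem; this is exactly what the Tietze extension step accomplishes, since $T(D)$ is a closed subset of the normal space $[0,1]^k$. Apart from this conceptual point, the argument is a direct composition and requires no new quantitative estimates beyond those already furnished by Theorem \ref{approx-th1}.
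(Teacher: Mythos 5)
Your proof is correct and follows essentially the same route as the paper, which justifies the corollary in a single sentence by observing that an affine layer maps $[0,1]^k$ onto the bounding box of $D$. In fact your write-up is more careful than the paper's: the Tietze extension step you include (extending $f$ from the compact set $T(D)$ to all of $[0,1]^k$ before invoking Theorem~\ref{approx-th1}) is genuinely needed and is silently omitted in the paper.
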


Also, the QNNL can   perform as a non-linear activation function. For example, we consider a QNNL $\mathcal{Q}_{\mathrm{ac}}$ with the input circuit \[
    \otimes_{j=1}^m R_y(x_j)
\] and the Hamiltonian
\[
    H_j = I\otimes\dots\otimes I\otimes\ket{0}\bra{0}\otimes I\otimes\dots\otimes I,
\]
where the projection is on the $j$-th qubit for $j = 1,\dots,m$. By simple computation, we have \begin{equation}
    \mathcal{Q}_{\mathrm{ac}}(\begin{pmatrix}
    x_1\\ \vdots\\ x_m
    \end{pmatrix}) = \begin{pmatrix}
    \cos(x_1)\\ \vdots\\ \cos(x_m)
    \end{pmatrix}.
\end{equation}
By the  universal approximation property \cite{LESHNO1993861,kratsios2019universal}, neural networks with non-polynomial activation functions can approximate any continuous function $f:\mathbb{R}^k \rightarrow \mathbb{R}^l$. Thus, the QDNN with QNNLs and affine layers can approximate any continuous function.

\subsubsection{Quantum advantages}

According to the definition of QNNLs in (\ref{eq-ql2}), each element of the outputs in a QNNL is of the form
\begin{equation}
    y_j = b_j + \bra{\psi_0} U^\dagger(\vec{x}) V^{\dagger}(\overrightarrow{W}) H_j V(\overrightarrow{W}) U(\vec{x}) \ket{\psi_0}.
    \label{eq:expection}
\end{equation}
In general, estimation of $\bra{\psi_0} U^\dagger(\vec{x}) V^{\dagger}(\overrightarrow{W}) H_j V(\overrightarrow{W}) U(\vec{x}) \ket{\psi_0}$  on a classical computer will be difficult by the following theorem.
\begin{theorem}
\label{th-1}
Estimation (\ref{eq:expection}) with precision $c<\frac{1}{3}$ is $\BQP$-hard,
where $\BQP$ is the bounded-error quantum polynomial time complexity class.
\end{theorem}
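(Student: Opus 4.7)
The plan is to prove $\BQP$-hardness by a direct reduction: for an arbitrary language $L\in\BQP$ I would show that an oracle that estimates (\ref{eq:expection}) to additive precision strictly less than $\tfrac{1}{3}$ decides $L$. By standard success-probability amplification, for each input $x$ there is a classically and efficiently constructible quantum circuit $C_x$ on $\mathrm{poly}(|x|)$ qubits such that the acceptance probability
\[
p_x \;=\; \bra{0\cdots 0}\, C_x^\dagger \bigl(\ket{1}\!\bra{1}\otimes I\bigr) C_x \ket{0\cdots 0}
\]
satisfies $p_x\geq 1-2^{-|x|}$ if $x\in L$ and $p_x\leq 2^{-|x|}$ otherwise.

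Next I would rewrite $C_x$ in the PQC form of Section~2.1, namely as a product $\prod_k \exp(-i\tfrac{\theta_k^\star}{2} G_k)$ with each $G_k$ a Pauli operator or $\mathrm{CNOT}$ and fixed angles $\vec\theta^\star$ depending classically and efficiently on $x$. I would declare this PQC to be $V(\overrightarrow{W})$ with the weights frozen to $\overrightarrow{W}=\vec\theta^\star$, take $U(\vec{x})=I$ (a trivial product of zero-angle rotations), set $\ket{\varphi_0}=\ket{0\cdots 0}$, choose $H_j=\ket{1}\!\bra{1}\otimes I$, and put $b_j=0$. Then the expression (\ref{eq:expection}) evaluates exactly to $p_x$. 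Since the gap between the yes- and no-cases is at least $1-2^{1-|x|}$, which exceeds $2c$ for any fixed $c<\tfrac{1}{3}$ and all sufficiently large $|x|$, a single oracle call followed by thresholding at $\tfrac{1}{2}$ decides $L$. As $L\in\BQP$ was arbitrary, this establishes $\BQP$-hardness.

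The main obstacle is essentially bookkeeping rather than conceptual: one must verify that a universal gate set can be written strictly in the prescribed PQC form $\exp(-i\tfrac{\theta}{2}G)$ with $G^2=I$, and that the global phases arising in the standard $R_z R_y R_z$ decomposition of arbitrary single-qubit unitaries can be discarded without altering expectation values. Both facts are routine, but the reduction genuinely relies on the (implicit) universality of the allowed generators together with $\mathrm{CNOT}$; without it, the class of PQCs would not contain arbitrary $\BQP$ circuits and the claim would have to be restricted accordingly. As a minor variant, one could instead absorb the $x$-dependence into the encoder by setting $U(\vec{x})$ to carry the angles $\vec\theta^\star$ and taking $V$ trivial, which fits the spirit of (\ref{eq-ql1}) but does not change the argument.
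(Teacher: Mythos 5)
Your proposal is correct and follows essentially the same reduction as the paper: take an arbitrary $L\in\BQP$, write its polynomial-sized verifier circuit as a PQC using the universality of the allowed rotations plus $\mathrm{CNOT}$, and read off the acceptance probability as the expectation value in (\ref{eq:expection}). The only real difference is how the gap is arranged to beat the precision $c<\tfrac{1}{3}$: the paper skips amplification and measures $H=Z\otimes I\otimes\cdots\otimes I$, so the standard $\tfrac{2}{3}$ vs.\ $\tfrac{1}{3}$ acceptance probabilities become expectation values $\leq-\tfrac{1}{3}$ vs.\ $\geq\tfrac{1}{3}$ (a gap of $\tfrac{2}{3}$), whereas you amplify first and use the projector $\ket{1}\!\bra{1}\otimes I$; both work, and your version is, if anything, more explicit about the gate-set and global-phase bookkeeping that the paper leaves implicit.
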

\begin{proof}
Consider any language $L\in \BQP$. There exists a polynomial-time Turing machine which takes $x\in\{0,1\}^n$ as input and outputs a polynomial-sized quantum circuit $C(x)$. Moreover, $x\in L$ if and only if the measurement result of $C(x)\ket{0}$ of the first qubit has the probability $\geq \frac{2}{3}$ to be $\ket{1}$.

Because $\{R_x, R_y, R_z, \mathrm{CNOT}\}$ are universal quantum gates, $C(x)$ can be expressed as a polynomial-sized PQC: $U_x(\vec{\theta}) = C(x)$ with proper parameters. Consider $H = Z\otimes I\otimes \dots\otimes I$, then \begin{equation}
    \bra{0} U_x(\vec{\theta}) H U_x(\vec{\theta}) \ket{0} \leq -\frac{1}{3}
\end{equation} if and only if $x\in L$, and \begin{equation}
    \bra{0} U_x(\vec{\theta}) H U_x(\vec{\theta}) \ket{0} \geq \frac{1}{3}
\end{equation} if and only if $x\notin L$.
\end{proof}

Given inputs, computing the outputs of classical DNNs is polynomial time. Hence, functions represented by classical DNNs are characterized by the complexity class $\Ppoly$. On the other hand, computing the outputs of QDNNs is $\BQP$-hard in general according to Theorem \ref{th-1}. The functions represented by QDNNs are characterized by a complexity class that has a lower bound $\ComplexityFont{BQP/poly}$. Here, $\ComplexityFont{BQP/poly}$ is the problems which can be solved by polynomial sized quantum circuits with bounded error probability \cite{aaronson2005complexity}. Under the hypothesis that quantum computers cannot be simulated efficiently by classical computers, which is generally believed, there exists a function represented by a QDNN which cannot be computed by classical circuits of polynomial size.
Hence, QDNNs have more representation power than DNNs.

\section{Results} 
\label{sec:results}
We used QDNN to conduct a numerical experiment for an image classification task. The data comes from the MNIST data set. We built a QDNN with 3 QNNLs. The goal of this QDNN is to recognize the digit in the image is either 0 or 1 as a classifier.

\subsection{Experiment details} 
The data in the MNIST is $28\times 28 = 784 $ dimensional images. This dimension is too large for the current quantum simulator. Hence, we first resize the image to $8\times 8$ pixels. We use three QNNLs in our QDNN, which will be called the input layer, the hidden layer, and the output layer, respectively.

In each layer, we set the ansatz circuit to be the one in figure \ref{fig:ansatz}, which is similar to the hardware efficient ansatz \cite{kandala2017hardware}.

\begin{figure*}
\begin{subfigure}{.4\textwidth}
  \centering
  \includegraphics[width=\linewidth]{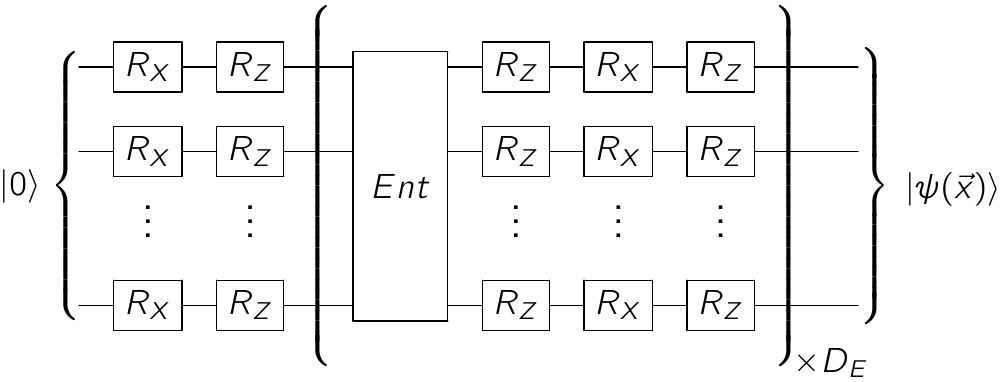}
  \caption{}
  \label{fig:encoder}
\end{subfigure}
\begin{subfigure}{.4\textwidth}
  \centering
  \includegraphics[width=\linewidth]{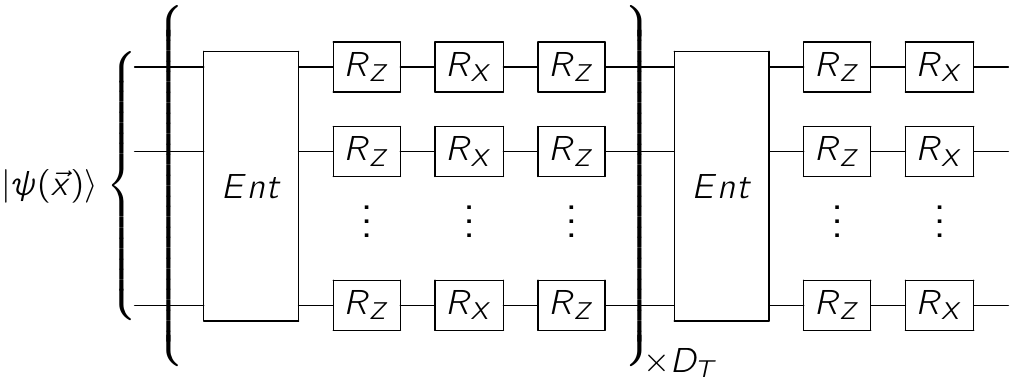}
  \caption{}
  \label{fig:transformation}
\end{subfigure}
\begin{subfigure}{.18\textwidth}
  \centering
  \includegraphics[width=.8\linewidth]{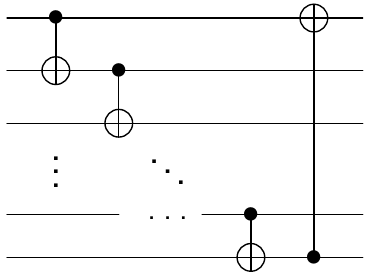}
  \caption{}
  \label{fig:ent}
\end{subfigure}
\caption{Ansatz circuits in each QNNL. a. The ansatz circuit of the encoder. b. The ansatz circuit of the transformation. c. The structure of gate $Ent$.}
\label{fig:ansatz}
\end{figure*}

\subsubsection{Input layer}
The input layer uses an 8-qubit circuit which accept an input vector $x\in\mathbb{R}^{64}$ and output a vector $\vec{h_1}\in\mathbb{R}^{24}$. The structure in FIG. \ref{fig:ansatz} is used, where $D_E=2, D_T=5$. We denote $V_{\mathrm{in}}(\overrightarrow{W}_{\mathrm{in}})$ for $\overrightarrow{W}_{\mathrm{in}}\in \mathbb{R}^{160}$ to be the transformation circuit in this layer.

The output of the input layer is of the form
\begin{equation} \vec{h}_1 =
    \begin{pmatrix}
        \bra{\psi(\vec{x},\overrightarrow{W}_{\mathrm{in}})}H_{1,X}\ket{\psi(\vec{x},\overrightarrow{W}_{\mathrm{in}})} \\
        \vdots \\
        \bra{\psi(\vec{x},\overrightarrow{W}_{\mathrm{in}})}H_{8,X}\ket{\psi(\vec{x},\overrightarrow{W}_{\mathrm{in}})} \\
        \bra{\psi(\vec{x},\overrightarrow{W}_{\mathrm{in}})}H_{1,Y}\ket{\psi(\vec{x},\overrightarrow{W}_{\mathrm{in}})} \\
        \vdots \\
        \bra{\psi(\vec{x},\overrightarrow{W}_{\mathrm{in}})}H_{8,Y}\ket{\psi(\vec{x},\overrightarrow{W}_{\mathrm{in}})} \\
        \bra{\psi(\vec{x},\overrightarrow{W}_{\mathrm{in}})}H_{1,Z}\ket{\psi(\vec{x},\overrightarrow{W}_{\mathrm{in}})} \\
        \vdots \\
        \bra{\psi(\vec{x},\overrightarrow{W}_{\mathrm{in}})}H_{8,Z}\ket{\psi(\vec{x},\overrightarrow{W}_{\mathrm{in}})}
    \end{pmatrix} + \vec{b}_{\mathrm{in}} \in \mathbb{R}^{24},
\end{equation}
where $\ket{\psi(\vec{x},\overrightarrow{W}_{\mathrm{in}})} = V_{\mathrm{in}}(\overrightarrow{W}_{\mathrm{in}})\ket{\psi(\vec{x})}$ and $H_{j,M}$ denotes the result obtained by applying  the operator $M$ on the $j$-th qubit for $M\in\{X,Y,Z\}$.

\subsubsection{Hidden layer}
The hidden layer uses 6 qubits. It accepts an vector $\vec{h_1}\in \mathbb{R}^{24}$ and outputs a vector $\vec{h_2}\in \mathbb{R}^{12}$. The structure shown in figure 4 is used, with $D_E=1, D_T=4$. Because there are $30$ parameters in the encoder, we set the last column of $R_Z$ gates to be $R_Z(0)$.
Similar to the input layer, the output of the hidden layer is \begin{equation} \vec{h_2} =
    \begin{pmatrix}
        \bra{\psi(\vec{h}_1,\overrightarrow{W_h})}H_{1,Y}\ket{\psi(\vec{h}_1,\overrightarrow{W_h})} \\
        \vdots \\
        \bra{\psi(\vec{h}_1,\overrightarrow{W_h})}H_{6,Y}\ket{\psi(\vec{h}_1,\overrightarrow{W_h})} \\
        \bra{\psi(\vec{h}_1,\overrightarrow{W_h})}H_{1,Z}\ket{\psi(\vec{h}_1,\overrightarrow{W_h})} \\
        \vdots \\
        \bra{\psi(\vec{h}_1,\overrightarrow{W_h})}H_{6,Z}\ket{\psi(\vec{h}_1,\overrightarrow{W_h})}
    \end{pmatrix} + \vec{b_h} \in \mathbb{R}^{12}.
\end{equation}

\subsubsection{Output layer}
The output layer uses 4 qubits. We also use the structure in FIG. \ref{fig:ansatz} with $D_E=1, D_T=2$. Because there are $20$ parameters in the encoder, we set the last column of $R_Z$ and $R_X$ gates to be $R_Z(0)$ and $R_X(0)$.
The output of the output layer is \begin{equation}
    \vec{y} =
    \begin{pmatrix}
        \bra{\psi(\vec{h_2},\overrightarrow{W}_\mathrm{out})}(\ket{0}\bra{0}\otimes I\otimes I\otimes I) \ket{\psi(\vec{h_2},\overrightarrow{W}_\mathrm{out})} \\
        \bra{\psi(\vec{h_2},\overrightarrow{W}_\mathrm{out})}(\ket{1}\bra{1}\otimes I\otimes I\otimes I) \ket{\psi(\vec{h_2},\overrightarrow{W}_\mathrm{out})}.
    \end{pmatrix}
\end{equation}
Notice that we do not add bias term here, and it will output a vector in $\mathbb{R}^2$. Moreover, after training, we hope to see if the input $\vec{x}$ is from an image of digit $0$, the output $\vec{y}$ should be close to $\ket{0}$, otherwise it should be close to $\ket{1}$.

In conclusion, the settings of these three layers are shown in table \ref{tab:table1}.

Finally, the loss function is defined as \begin{equation}
    L = \frac{1}{|\mathcal{D}|}\sum_{(\vec{x},y)\in \mathcal{D}}\big|\mathrm{DNN}(\vec{x}) - \ket{y}\big|^2,
\end{equation}
where $\mathcal{D}$ is the training set.

\newcommand{\tabincell}[2]{\begin{tabular}{@{}#1@{}}#2\end{tabular}}
\begin{table}
    \centering
    \begin{tabular}{l|p{1.2cm}<{\centering}|p{1.8cm}<{\centering}|p{1.8cm}<{\centering}|p{3cm}<{\centering}}
         & \# of qubits & Input dimension & Output dimension & \# of parameters (transformation + bias) \\
        \hline
        Input layer & 8 & 64 & 24 & 136 + 24\\
        Hidden layer & 6 & 24 & 12 & 84 + 12\\
        Output layer & 4 & 12 & 2 & 32 + 0
    \end{tabular}
    \caption{Settings of three layers}
    \label{tab:table1}
\end{table}

\subsection{Experiments result}
We used the Julia package \texttt{Yao.jl} \cite{YaoFramework2019} as a quantum simulator in our experiments. All data were collected on a desktop PC with Intel CPU i7-4790 and 4GB RAM.

All parameters were initialized randomly in $(-\pi,\pi)$. We use Adam optimizer \cite{kingma2014adam} to update parameters. We train this QDNN for 400 iterations with batch size of 240. In the first 200 of iterations, the hyper parameters of Adam is set to be $\eta=0.01, \beta_1=0.9, \beta_2=0.999$.
In the later 200 of iterations, we change $\eta$ to $0.001$.

The values of the loss function  on the training set and test set during training is shown in figure \ref{fig:loss}. The accurate rate of this QDNN on the test set rise to $99.29\%$ after training.
\begin{figure}
    \centering
    \includegraphics[width=0.7\textwidth]{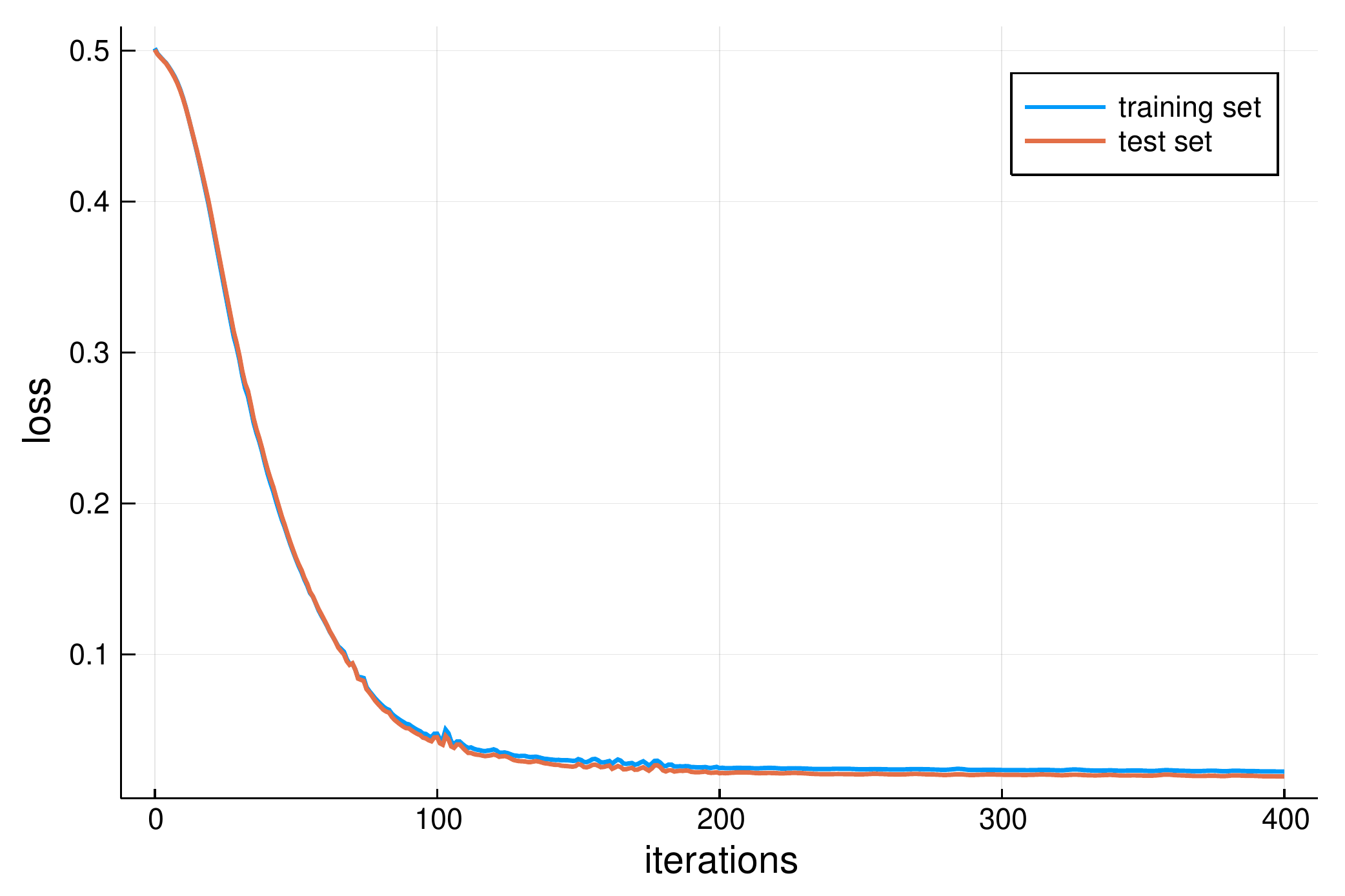}
    \caption{Loss function}
    \label{fig:loss}
\end{figure}

\section{Discussion} 
\label{sec:discussion}
We introduced the model of QNNL and built QDNN with QNNLs. We proved that QDNNs have more representation power than classical DNNs. We presented a practical gradient-based training algorithm as the analog of BP algorithms. Because the model is based on the hybrid quantum-classical scheme, it can be realized on NISQ processors.
As a result, the QDNN has more representation powers than classical DNNs and still
keeps most of the advantages of the classical DNNs.

Since we use a classical simulator on a desktop PC for quantum computation, only QDNNs with a small number of qubits can be used and only simple examples can be demonstrated.
Quantum hardware is developing fast. Google achieved quantum supremacy by using a superconducting quantum processor with 53 qubits \cite{arute2019quantum}.
From table \ref{tab:table1}, up to 8 qubits are used in our experiments described in Sections 2.5 and 4.2,
so in principle, our image classification experiment can be implemented in Google's
quantum processor. With quantum computing resources, we can access exponential dimensional feature Hilbert spaces \cite{schuld2019quantum} with QDNNs and only use polynomial-size of parameters. Hence, we believe that QDNNs will help us to extract features more efficiently than DNNs.
This is similar to the ideas of kernel methods \cite{shawe2004kernel, zelenko2003kernel}.

\section*{Acknowledgements} 
\label{sec:acknowledgements}
We thank Xiu-Zhe Luo and Jin-Guo Liu for helping in Julia programming. This work is partially supported by an NSFC grant no. 11688101 and an NKRDP grant no. 2018YFA0306702.

\section*{Data availability}
All codes are available on \texttt{http://github.com/ChenZhao44/QDNN.jl}. And the data that support the findings of this study are available if request.

\bibliographystyle{naturemag}

\bibliography{QDNN}

\end{document}